\documentclass[a4paper,11pt,reqno]{amsart}

\usepackage{amssymb,mathptmx}
\usepackage{microtype,cite}
\usepackage{booktabs}
\usepackage{tikz}

\usepackage[colorlinks,linkcolor=blue,citecolor=blue,urlcolor=blue]{hyperref}
\newtheorem{lemma}{Lemma}

\theoremstyle{definition}

\theoremstyle{remark}
\newtheorem{remark}{Remark}

\DeclareMathOperator{\PP}{\mathbb{P}}
\DeclareMathOperator{\EE}{\mathbb{E}}

\DeclareMathOperator{\Beta}{\ensuremath{Beta}}
\DeclareMathOperator{\Bernoulli}{\ensuremath{Bernoulli}}

\newcommand{\iid}{i.{\kern1pt}i.{\kern1pt}d.}
\newcommand{\ie}{i.{\kern1pt}e.}

\renewcommand{\mid}{\mkern2mu|\mkern2mu}

\renewcommand{\leq}{\leqslant}
\renewcommand{\geq}{\geqslant}
\newcommand{\eps}{\epsilon}

\newcommand{\deq}{\mathrel{\mathop:}=}

\newcommand{\mku}{\mkern1mu}
\renewcommand{\,}{\ifmmode\mkern2mu\else\thinspace\fi}

\newcommand{\XN}{X_{N}}

\begin{document}

\title[Finite-depth scaling and a Bernoulli-leaf identity for the min-plus process]{Finite-depth scaling and an exact Bernoulli-leaf identity for the min-plus process on the binary tree}

\author[J. R. G. Mendon\c{c}a]{Jos\'{e} Ricardo G. Mendon\c{c}a}
\address{Universidade de S\~{a}o Paulo, S\~{a}o Paulo, SP, Brasil}
\email{jricardo@usp.br}


\begin{abstract}
The min-plus process is a stochastic coagulation-annihilation-type process on the binary tree, of interest in mathematics, physics, and computer science as a tractable instance of max-type recursive distributional equations. We carry out large Monte Carlo simulations at effective tree depths up to $N=60$ that provide finite-depth corroboration of the $\Beta(2,1)$ stretched-exponential limit for its root value $\XN$ at $p=1/2$, on the asymmetric $\sqrt{N}$ side of the random-homogeneous-systems classification recently introduced by Chen, Duquesne, and Shi and by Morfe. Off criticality, our simulations confirm the sub-critical closed form $\PP(X_{\infty}=1)=(1-2p)/(1-p)$ within Monte Carlo error and document a super-critical mean growth exceeding the elementary $(2p)^{N}$ lower bound at the depths we reach. For a $\Bernoulli(q)$-initial-condition variant, we identify an elementary closed-form identity at $p=1/2$ that pins down the order parameter $\PP(\XN=0) = q$ exactly, locates the absorbing-state phase transition at $p_{c}=1/2$ in the operator-mixing probability rather than in the initial-zero density, and shows that the conditional law on positives deforms substantially with $q$. Our simulations use a level-wise recursion and an FFT-based precomputed leaf table which reduce the effective simulation depth while preserving the recursive tree law and may be useful for the simulation of related recursive equations on large trees.
\end{abstract}

\subjclass[2020]{Primary: 60J80; Secondary: 60K35, 82B20, 65C05}

\keywords{min-plus process, max-type recursive distributional equation, absorbing-state phase transition, hipster random walk, stretched-exponential distribution, Monte Carlo simulation}

\maketitle


\section{Introduction}
\label{intro}

Let $X_{0}^{N} \deq (X_{0}(1), \ldots, X_{0}(2^{N}))$ be a family of nonnegative integers attached to the labeled leaves of a rooted binary tree $T_{N}$ of depth $N \geq 1$ with $|T_{N}| = 2^{N+1}-1$ vertices. We define the min-plus stochastic process on $T_{N}$ as follows~\cite{AuffingerCable2017}: with the leaves indexed by $n=0$, the parent $X_{n+1}(i)$ of the nodes $X_{n}(2i-1)$, $X_{n}(2i)$, $i=1, \dots, 2^{N-n-1}$, takes the value
\begin{equation}
\label{eq:minplus}
X_{n+1}(i) = \begin{cases}
X_{n}(2i-1)+X_{n}(2i) & \text{with probability } p, \\
\min(X_{n}(2i-1),X_{n}(2i)) & \text{with probability } 1-p,
\end{cases}
\end{equation}
where $\min(a,b)=\frac{1}{2}(a+b-|a-b|)$ and $0 \leq p \leq 1$. The process resembles a stochastic coagulation-annihilation process moving up the tree toward the root; see Figure~\ref{fig:minplus}. The question is: what is the distribution of $\XN \deq \XN(1)$ at the root in the limit $N \to \infty$? 

The min-plus process was raised by R.~Pemantle as an open problem in 2017 and given its first rigorous treatment by Auffinger and Cable~\cite{AuffingerCable2017}. The case $p=1/2$ lies exactly at the percolation threshold of the underlying binary tree of `min'-labeled internal nodes (cf.~Remark~\ref{rmk:percolation}), and is the regime of greatest interest.

\begin{figure}[t]
\centering
\begin{tikzpicture}[
  level distance=10mm,
  level 1/.style={sibling distance=64mm},
  level 2/.style={sibling distance=32mm},
  level 3/.style={sibling distance=16mm},
  level 4/.style={sibling distance=8mm},
  minnode/.style={rectangle, draw=black, fill=white, text=black,
                  minimum size=5mm, inner sep=0pt, font=\scriptsize},
  plusnode/.style={circle, draw=black, fill=white, text=black,
                   minimum size=5.64mm, inner sep=0pt, font=\scriptsize},
  leaf/.style={circle, draw=black, densely dashed, fill=white, text=black,
               minimum size=5.64mm, inner sep=0pt, font=\scriptsize}
]
\node[plusnode] {3}
  child {node[minnode] {1}
    child {node[plusnode] {2}
      child {node[minnode] {1}
        child {node[leaf] {1}} child {node[leaf] {1}} }
      child {node[minnode] {1}
        child {node[leaf] {1}} child {node[leaf] {1}} } }
    child {node[minnode] {1}
      child {node[minnode] {1}
        child {node[leaf] {1}} child {node[leaf] {1}} }
      child {node[minnode] {1}
        child {node[leaf] {1}} child {node[leaf] {1}} } } }
  child {node[plusnode] {2}
    child {node[minnode] {1}
      child {node[minnode] {1}
        child {node[leaf] {1}} child {node[leaf] {1}} }
      child {node[plusnode] {2}
        child {node[leaf] {1}} child {node[leaf] {1}} } }
    child {node[minnode] {1}
      child {node[plusnode] {2}
        child {node[leaf] {1}} child {node[leaf] {1}} }
      child {node[minnode] {1}
        child {node[leaf] {1}} child {node[leaf] {1}} } } };
\end{tikzpicture}
\caption{\label{fig:minplus}A realization of the critical min-plus stochastic process on the binary tree $T_{4}$. Squares are internal nodes labeled $\min$; circles are internal nodes labeled $+$; dashed circles at the bottom are the leaves, here all set to $X_{0}(i) = 1$. Values inside each node are the level-by-level outputs of the recursion (\ref{eq:minplus}). The realization shown gives root value $\XN = 3$.}
\end{figure}

The min-plus process belongs to the family of {max-type recursive distributional equations} of the form $X \stackrel{d}{=} A(X^{(1)}, X^{(2)}, \ldots, X^{(b)})$, where the $X^{(i)}$ are independent copies of $X$ and $A$ is a (possibly random) operator~\cite{AldousBandyopadhyay2005}, and is among the simplest tractable instances in which the operator itself is selected at random between two competing alternatives at every node of the tree. The $\Beta(2,1)$ limit law identified by \cite{AuffingerCable2017} for $\XN$ at $p=1/2$ also governs the lazy hipster random walk of~\cite{AddarioBerryCairns2020}, placing both in an asymmetric $\sqrt{N}$ universality class alongside a companion symmetric class at rate $N^{1/3}$. See Remark~\ref{rmk:hipster} for this classification and where the min-plus process lies within it.

A related class of recursive distributional problems appears in random Boolean formulas. Any Boolean function on $n$ variables can be represented as a labeled binary tree with literals at the leaves and the connectives \textbf{and} and \textbf{or} at internal nodes, and the asymptotic statistics of the function computed by a random \textbf{and}--\textbf{or} tree---the probability that a random formula of depth $N$ computes a given function, or the typical formula size required to compute it---are recursive distributional problems with the multiplicative pair $\{\textbf{and},\textbf{or}\} \mapsto \{xy,\, 1-(1-x)(1-y)\}$ in place of $\{\min,+\}$~\cite{PemantleWard2006,LefmannSavicky1997,GardyWoods2005,ChauvinFlajolet2004}. Lindenstrauss and Talagrand~\cite{LindenstraussTalagrand2022} have recently shown that random balanced \textbf{and}--\textbf{or} trees of depth $N$ become constant (tautology or antitautology) above the stretched-exponential threshold $k_{N} \sim \exp(\sqrt{N})$ in the number of input variables---the same $\sqrt{N}$ critical scale as the min-plus root value.

Similar recursive structures occur in hierarchical renormalization models for pinning transitions in disordered systems, going back to Cook and Derrida's 1989 study of polymers on hierarchical lattices~\cite{CookDerrida1989}. The most studied modern incarnation is the Derrida--Retaux toy model~\cite{DerridaRetaux2014} of the depinning of an interface from a disordered substrate~\cite{DerridaHakimVannimenus1992}, equivalent after shifting and rescaling to the canonical max-type iteration $X_{n+1} = \max(X_{n}^{(1)}+X_{n}^{(2)}-1,0)$ studied in~\cite{HuShi2018,ChenDagard2021,DerridaShi2020}. Derrida and Retaux conjectured an infinite-order phase transition of Berezinskii--Kosterlitz--Thouless type for the free energy of this recursion as a function of the leaf distribution; the BKT exponent $1/2$ has since been established under suitable integrability conditions~\cite{ChenDagard2021}, while different integrability conditions on the leaves give rise to a family of distinct universality classes~\cite{HuShi2018}. Several open questions are shared with the min-plus model; in particular, the derivation of the full distribution of $\XN$ would furnish the free energy for some of the physical models.

In this paper we report large-scale Monte Carlo simulations of the distribution of $\XN$ in light of the rigorous picture above. Section~\ref{sec:critical} treats the critical model ($p=1/2$, all $X_{0}(i)=1$) on trees of effective depth up to $N=60$, testing the data against the scaling law of~\cite{AuffingerCable2017}. Section~\ref{sec:noncritical} covers the off-critical regimes $p \neq 1/2$: the sub-critical sweep verifies the closed form $\PP(X_{\infty}=1) = (1-2p)/(1-p)$ within Monte Carlo error, while the super-critical sweep shows the lower bound $\EE[\XN] \geq (2p)^{N}$ of Lemma~\ref{lem:px1}\,(iii) is not tight at accessible depths. Section~\ref{sec:variant} extends the model to $\Bernoulli(q)$ leaves, where an elementary recursion determines the order parameter $\PP(\XN=0)$ at $p=1/2$ and locates the absorbing-state transition at $p_{c} = 1/2$ in the operator-mixing probability rather than in the leaf-zero density. Section~\ref{sec:summary} summarizes and discusses perspectives on min-plus-type models.


\section{The critical min-plus process}
\label{sec:critical}

When all $X_{0}(i)=1$ and $p=1/2$, we refer to process (\ref{eq:minplus}) as the critical min-plus process. Although this section treats this critical case, it is economical to record the behavior of $\PP(X_{n}=1)$ for all $p$ at once; parts~(i) and~(iii) of the following lemma are used in Section~\ref{sec:noncritical}.

\begin{lemma}
\label{lem:px1}
In the min-plus process on $T_{N}$ with all $X_{0}(i) = 1$ and operator-mixing probability $p$, let $y_{n} \deq \PP(X_{n} = 1)$. Then $y_{n}$ obeys the deterministic recursion
\begin{equation}
\label{eq:y-rec}
y_{n+1} = (1-p)(2 y_{n} - y_{n}^{2}), \quad y_{0} = 1,
\end{equation}
with fixed points $y = 0$ and, for $p \leq 1/2$, $y^{*} = (1-2p)/(1-p) \in [0,1]$. Three regimes follow:
\begin{enumerate}
\item[(i)] For $p < 1/2$, $y_{n} \to y^{*} = (1-2p)/(1-p)$ at geometric rate $(2p)^{n}$.
\item[(ii)] At $p=1/2$, $y_{n} \to 0$ with $y_{n} \sim 2/n$.
\item[(iii)] For $p > 1/2$, $y_{n} \to 0$ at geometric rate $(2(1-p))^{n}$.
\end{enumerate}
\end{lemma}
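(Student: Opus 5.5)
The plan is to derive the recursion (\ref{eq:y-rec}) first and then analyze the one-dimensional map $f(y)=(1-p)(2y-y^{2})$ on $[0,1]$.

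For the recursion, note that since all leaves equal $1$ and both operations preserve positivity of integers, every $X_{n}(i)\geq 1$. The two subtrees below the root of $T_{n+1}$ are disjoint and are driven by independent operator choices, so their values $X_{n}^{(1)},X_{n}^{(2)}$ are \iid\ copies of $X_{n}$. Both are independent of the operator chosen at the root. If the root operator is $+$, then $X_{n+1}\geq 2$, because each summand is at least $1$. If the root operator is $\min$, then $X_{n+1}=1$ exactly when at least one child equals $1$. This event has probability $1-(1-y_{n})^{2}=2y_{n}-y_{n}^{2}$. Multiplying by $1-p$ gives (\ref{eq:y-rec}). The fixed points solve $y=(1-p)(2y-y^{2})$, so either $y=0$ or $1=(1-p)(2-y)$, which gives $y^{*}=(1-2p)/(1-p)$. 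This root lies in $[0,1]$ exactly when $p\leq 1/2$.

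For the dynamics, $f$ is increasing on $[0,1]$ and $f(1)=1-p<1=y_{0}$. Monotonicity then gives $y_{1}\leq y_{0}$ and, by induction, $y_{n+1}\leq y_{n}$. The sequence also stays above any fixed point lying below $1$, since $f$ is increasing. So $y_{n}$ decreases to the largest fixed point in $[0,1]$:
\begin{itemize}
\item for $p<1/2$, the limit is $y^{*}$;
\item for $p\geq1/2$, the only fixed point in $[0,1]$ is $0$, and the limit is $0$.
\end{itemize}
The rates come from linearization, using $f'(y)=2(1-p)(1-y)$. At $y^{*}$ we have $1-y^{*}=p/(1-p)$, hence $f'(y^{*})=2p<1$. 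At $0$ we have $f'(0)=2(1-p)$, which is $<1$ for $p>1/2$.

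To turn these slopes into genuine geometric rates, I would use the exact factorizations
\begin{equation*}
y_{n+1}-y^{*}=(1-p)(y_{n}-y^{*})(2-y_{n}-y^{*}), \qquad y_{n+1}=2(1-p)\,y_{n}\,(1-y_{n}/2).
\end{equation*}
In case (i), $y_{n}\downarrow y^{*}$, so the factor $(1-p)(2-y_{n}-y^{*})$ increases to $2p$. This yields $y_{n}-y^{*}=\Theta((2p)^{n})$, with the ratio converging because $\sum(y_{n}-y^{*})<\infty$. Case (iii) is the same argument: the factor $(1-y_{n}/2)\to1$ summably, giving $y_{n}\asymp(2(1-p))^{n}$.

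The main step is (ii), the critical case $p=1/2$, where $y_{n+1}=y_{n}-y_{n}^{2}/2$ and the fixed point is neutral. Here I would use the standard reciprocal trick:
\begin{equation*}
\frac{1}{y_{n+1}}=\frac{1}{y_{n}}\cdot\frac{1}{1-y_{n}/2}=\frac{1}{y_{n}}+\frac12+O(y_{n}).
\end{equation*}
Since $y_{n}\to0$, summing gives $1/y_{n}=n/2+o(n)$, that is, $y_{n}\sim 2/n$. For a sharper statement, a bootstrap gives an $O(\log n)$ correction, but only the leading term is needed.
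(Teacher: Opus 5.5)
Your proposal is correct and follows the paper's proof. It uses the same conditioning on the root operator, with $X_{n} \geq 1$ killing the `$+$' branch, the same fixed-point factorization, the linearization slopes $f'(y^{*}) = 2p$ and $f'(0) = 2(1-p)$, and the same reciprocal substitution at $p=1/2$ ($1/y_{n}$, versus the paper's $z_{n} = 2/y_{n}$). Your monotone-convergence step and exact product factorizations usefully make rigorous the convergence and the upgrade from slopes to genuine geometric rates, which the paper leaves implicit; your inequality $f(1) < 1$ fails only at $p=0$, where trivially $y_{n} \equiv y^{*} = 1$.
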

\begin{proof}
Conditioning on the operator at the parent of $X_{n}^{(1)}$, $X_{n}^{(2)}$,
\[
\PP(X_{n+1} = 1) = p\PP(X_{n}^{(1)} + X_{n}^{(2)} = 1) + (1-p)\PP(\min(X_{n}^{(1)}, X_{n}^{(2)}) = 1).
\]
Since $X_{0} = 1$, induction on $n$ shows $X_{n} \geq 1$, so $X_{n}^{(1)} + X_{n}^{(2)} \geq 2$ and the first term vanishes. The minimum equals $1$ iff at least one of its arguments does, with probability $1 - (1-y_{n})^{2} = 2y_{n}-y_{n}^{2}$, yielding (\ref{eq:y-rec}). The fixed-point equation $y = (1-p)(2y-y^{2})$ factorizes as $y[(2p-1) + (1-p)\,y] = 0$, giving $y=0$ and $y=(1-2p)/(1-p)$, the latter in $[0,1]$ only for $p \leq 1/2$. The geometric rates in (i) and (iii) follow from $f'(y) = (1-p)(2-2y)$ evaluated at the relevant fixed point: $f'(0) = 2(1-p)$ and $f'(y^{*}) = 2p$. For (ii), at $p=1/2$ the recursion reads $y_{n+1} = y_{n} - y_{n}^{2}/2$; setting $z_{n} = 2/y_{n}$ gives $z_{n+1} = z_{n} + 1 + O(y_{n}) = z_{n} + 1 + O(1/n)$, so $z_{n} \sim n$ and $y_{n} \sim 2/n$.
\end{proof}

\begin{remark}
Lemma~\ref{lem:px1}\,(i) above recovers Lemma~16 of~\cite{AuffingerCable2017} for the limiting mass of $X_{\infty}$ at $1$.
\end{remark}

\begin{remark}
\label{rmk:percolation}
The recursion (\ref{eq:y-rec}) has a percolation interpretation: $\{X_{n} = 1\}$ is the event that there exists a path from the level-$n$ vertex down to a leaf along which every internal node is labeled `$\min$', equivalently the survival event of the Galton--Watson process whose offspring distribution attaches to each `alive' node a deterministic pair of alive nodes when its operator is `$\min$' and none when it is `$+$': $\xi = 2$ with probability $1-p$, $\xi = 0$ with probability $p$. The mean offspring $2(1-p)$ makes this branching process sub-critical for $p > 1/2$, critical at $p=1/2$, and super-critical for $p < 1/2$, matching the three regimes (i)--(iii). The leading constant in the $y_{n} \sim 2/n$ asymptotic at criticality is then Kolmogorov's classical theorem for critical Galton--Watson processes~\cite{AthreyaNey1972}, with offspring variance $\sigma^{2} = 4(1-p)p = 1$ at $p=1/2$, giving $n \cdot \PP(\text{survival to gen.}~n) \to 2/\sigma^{2} = 2$. The recursive structure places this transition within Johnson's taxonomy of recursive properties on Galton--Watson trees~\cite{Johnson2022}, where the criteria $f'(0) = 1$ and $f''(0) < 0$ at $p_{c} = 1/2$ mark it as continuous.
\end{remark}

We simulated the critical min-plus process on trees of effective depth $N$ ranging from $20$ to $60$ (i.\,e., up to $\approx 2.3 \times 10^{18}$ vertices), with $M=10^{5}$ Monte Carlo samples per depth. The simulation algorithm, the precomputed-leaves machinery that makes the deeper runs feasible, and the CPU and statistical bottlenecks are described in Appendix~\ref{sec:methods}.

The empirical distribution of $\XN$ for the largest accessible effective depths is shown in Figure~\ref{fig:dist_critical}, with the corresponding summary statistics collected in Table~\ref{tab:critical_summary}. The distribution is heavy-tailed and concentrates on small integers: most realizations have $\XN \in \{1,2,3\}$, while the maximum observed across $M=10^{5}$ samples increases steadily with $N$. The relevant rescaling is logarithmic: by \cite[Theorem~1]{AuffingerCable2017}, at $p=1/2$ one has
\begin{equation}
\label{eq:auffinger}
\frac{\log{\XN}}{\sqrt{cN}} \xrightarrow{~d~} \mathcal{B}, \quad c = \frac{\pi^{2}}{3},
\end{equation}
where $\mathcal{B} \sim \Beta(2,1)$, supported on $[0,1]$ with cdf $\PP(\mathcal{B} \leq t) = t^{2}$. Equivalently, $\log{\XN} = \Theta(\sqrt{N})$ in distribution, so $\XN$ grows at a {stretched-exponential} rate. The corresponding theoretical asymptote is \cite[Corollary~1]{AuffingerCable2017}
\begin{equation}
\label{eq:AClim}
\frac{\EE[\log{\XN}]}{\sqrt{N}} \to \frac{2\pi}{3\sqrt{3}} \approx 1.209,
\end{equation}
which we test against our data in Figure~\ref{fig:critical_scaling}. The empirical ratio $\widehat{\EE}[\log{\XN}]/\sqrt{N}$ rises monotonically from $0.745$ at $N=20$ to $0.870$ at $N=60$, well below the limit (\ref{eq:AClim}) even at $N=60$. Plotted against $1/\sqrt{N}$ (right panel of Figure~\ref{fig:critical_scaling}) the data fall along an approximately straight line. An ordinary least-squares fit to the seven depths $N = 36, 40, \ldots, 60$ gives
\begin{equation}
\frac{\widehat{\EE}[\log{\XN}]}{\sqrt{N}} \approx 1.060 - \frac{1.476}{\sqrt{N}},
\end{equation}
with $R^{2} > 0.994$; the intercept undershoots the limit (\ref{eq:AClim}) by $\approx 12\%$. The curvature visible in the right panel of Figure~\ref{fig:critical_scaling} is statistically significant, however, and once subleading structure is allowed the data become quantitatively consistent with the theoretical limit at the few-percent level. We defer the comparison of linear, quadratic, and Bulirsch--Stoer rational extrapolations to Appendix~\ref{sec:appendix-statistics} and Table~\ref{tab:extrap}.

\begin{figure}[t]
\centering
\includegraphics[width=\linewidth]{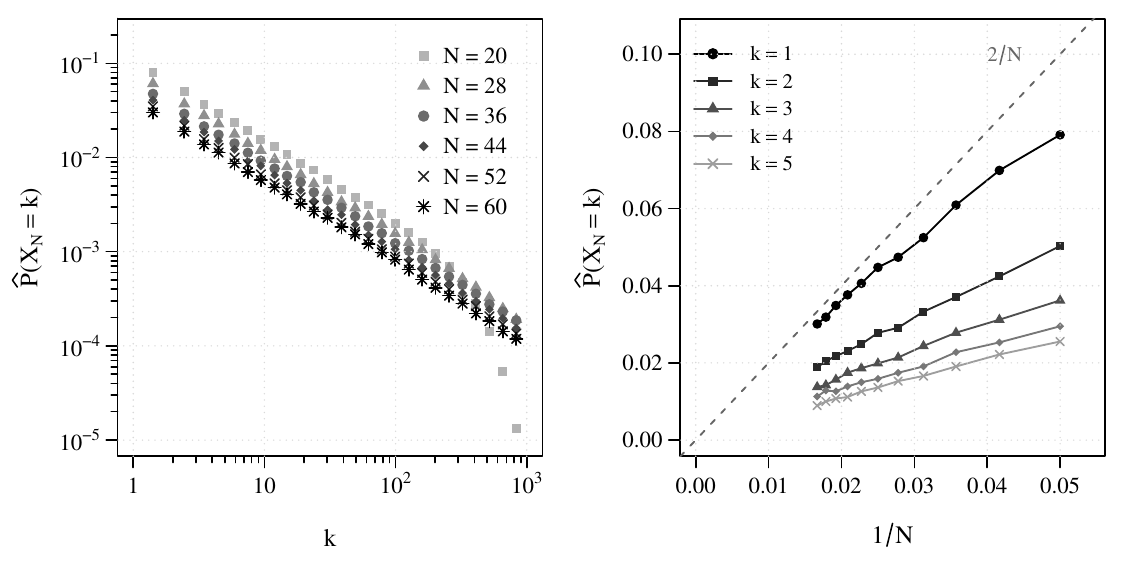}
\caption{\label{fig:dist_critical}\emph{Left:} empirical distribution $\widehat{\PP}(\XN = k)$ at the root of the critical min-plus process for a representative subset of effective depths, $M=10^{5}$ samples per $N$. Points are aggregated over geometric (log-spaced) bins and plotted at each bin's geometric midpoint; the mass at small integers shrinks with $N$ while the tail extends. \emph{Right:} the same data read at fixed $k$: $\widehat{\PP}(\XN = k)$ versus $1/N$ for $k = 1, \ldots, 5$ over all eleven simulated depths, with the theoretical rate $2/N$ of Lemma~\ref{lem:px1}\,(ii) dashed. No fixed $k$ retains mass as $N \to \infty$.}
\end{figure}

Beyond the leading mean, the full {shape} of the rescaled distribution at finite $N$ matches the theoretical $\Beta(2,1)$ profile once the prefactor lag is accounted for. Figure~\ref{fig:qq_beta21} compares the empirical quantile function of $\log{\XN}/\sqrt{cN}$ at $N=60$ against the $\Beta(2,1)$ quantile $\sqrt{u}$: the Q-Q trace is nearly affine, with an ordinary-least-squares fit $\widehat Q^{\text{emp}}(u) \approx 0.952\,\sqrt{u} - 0.155$ achieving $R^{2} = 0.996$ across the unit interval. The slope close to $1$ and the negative intercept reproduce the picture above: the empirical distribution at our deepest depth has the right $\Beta(2,1)$ {shape} and is shifted down by a finite-$N$ scale prefactor.

\begin{remark}
\label{rmk:hipster}
Two universality classes are now established for critical recursions on binary trees with a max-or-sum character at log scale. The asymmetric class, with rate $\sqrt{N}$ and limit $\Beta(2,1)$ on $[0,1]$, is exemplified by the result of Auffinger and Cable~\cite{AuffingerCable2017} for the min-plus root value and for the critical series-parallel distance (a special case of the same theorem), and by the lazy hipster random walk of Addario-Berry~et~al.~\cite{AddarioBerryCairns2020}. The symmetric class, with rate $N^{1/3}$ and limit density $\tfrac{3}{4}(1-x^{2})$ on $(-1, 1)$, is exemplified by the symmetric hipster random walk of~\cite{AddarioBerryCairns2020} and, very recently, by the critical series-parallel effective resistance, where the conjectures of Hambly and Jordan~\cite{HamblyJordan2004}, Addario-Berry~et~al.~\cite{AddarioBerryCairns2020}, and Derrida have been simultaneously resolved by Chen, Duquesne, and Shi~\cite{ChenDuquesneShi2025} and independently by Morfe~\cite{Morfe2025} within a unified framework of random homogeneous systems on a function space, parametrized by $(\alpha_{+}, \alpha_{-}, p)$. In their classification the min-plus process lies in the asymmetric class, with $\alpha_{+} > 0$ from the `$+$' branch and $\alpha_{-} = 0$ from the `$\min$' branch, consistent with the $\Beta(2,1)$ limit at rate $\sqrt{N}$.
\end{remark}

The asymptotic $\PP(\XN=1) \sim 2/N$ from Lemma~\ref{lem:px1}\,(ii) is consistent with our data within Monte Carlo error. The exact recursion (\ref{eq:y-rec}) reproduces the measured probabilities to within $1.5$ standard errors at every depth, with the largest discrepancies at the smallest accessible $N$ (Table~\ref{tab:critical_summary}, last two columns), while the leading-order $2/N$ asymptote is approached slowly from below as $N$ grows, in line with the well-known slow convergence of the conditional survival probability for critical Galton--Watson processes~\cite{AthreyaNey1972}. More precisely, carrying the substitution $z_{n} = 2/y_{n}$ in the proof of Lemma~\ref{lem:px1}\,(ii) to next order gives $z_{n+1} = z_{n} + 1 + z_{n}^{-1} + O(z_{n}^{-2})$, whence
\begin{equation}
y_{n} = 2/n - 2\log{n}/n^{2} + O(1/n^{2}),
\end{equation}
so the deviation $2/N - y_{N}$ is dominated by the logarithmic factor at the depths in Table~\ref{tab:critical_summary}.

The right panel of Figure~\ref{fig:dist_critical} displays the escape of mass behind these rates: read at fixed $k$, the empirical $\widehat{\PP}(\XN = k)$ vanishes as $N$ grows for every $k$, so no fixed integer retains probability mass in the limit and the law of $\XN$ admits no non-trivial limit on fixed values---the meaningful limit is that of the rescaled logarithm (\ref{eq:auffinger}). For $k = 1$ this is Lemma~\ref{lem:px1}\,(ii), the points approaching the reference line $2/N$ from below with $N\mku\widehat{\PP}(\XN{=}1)/2$ rising from $0.79$ at $N = 20$ to $0.90$ at $N = 60$, in line with the logarithmic correction above; for $k = 2, \ldots, 5$ the same $\Theta(1/N)$ decay is observed empirically, with $N\mku\widehat{\PP}(\XN{=}k) \approx 1.13$, $0.83$, $0.68$, $0.53$ at $N = 60$. A linear extrapolation of $\widehat{\PP}(\XN = k)$ in $1/N$, taken literally, returns intercepts of at most a few times $10^{-3}$, attributable to the logarithmic curvature rather than to surviving mass.

\begin{figure}[t]
\centering
\includegraphics[width=\linewidth]{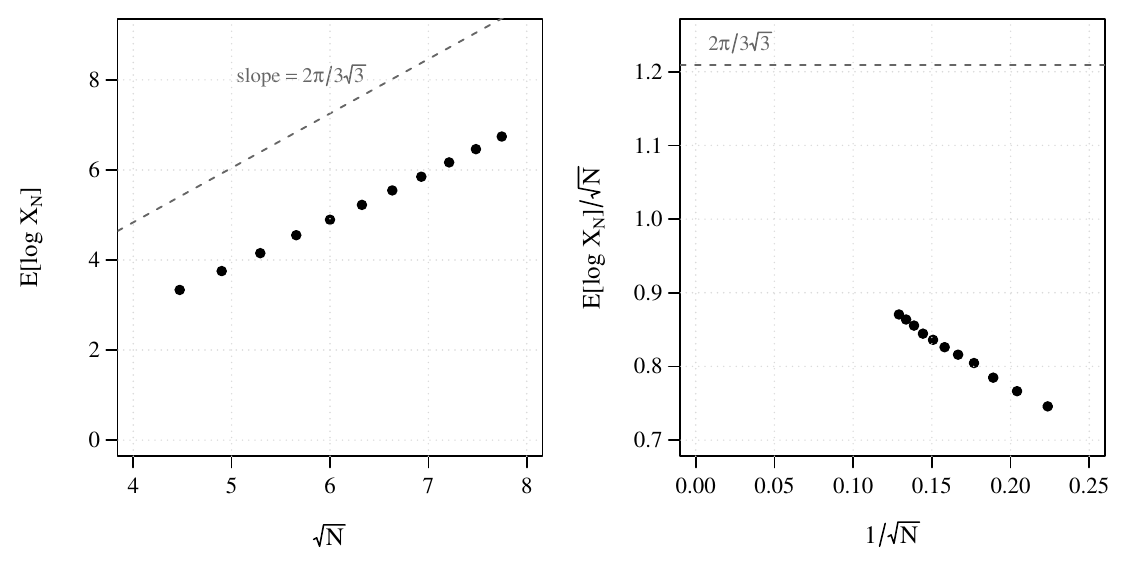}
\caption{\label{fig:critical_scaling}Test of the asymptotic of \cite[Corollary~1]{AuffingerCable2017} on our $M=10^{5}$ data at $p=1/2$. \emph{Left:} $\widehat{\EE}[\log{\XN}]$ versus $\sqrt{N}$, with the dashed reference line of theoretical slope $2\pi/3\sqrt{3} \approx 1.209$. \emph{Right:} $\widehat{\EE}[\log{\XN}]/\sqrt{N}$ versus $1/\sqrt{N}$, showing the slow approach to the theoretical limit. Error bars (1 SE) are smaller than the markers in both panels.}
\end{figure}

\begin{figure}[t]
\centering
\includegraphics[width=0.55\linewidth]{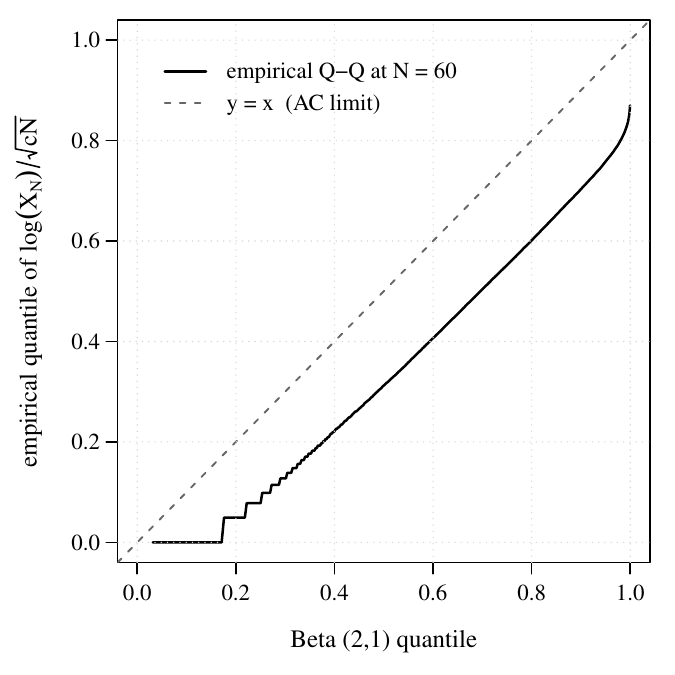}
\caption{\label{fig:qq_beta21}Quantile-quantile plot of the empirical rescaled distribution $\log{\XN}/\sqrt{cN}$ at $p=1/2$, $N=60$, $M=10^{5}$ samples (solid), against the $\Beta(2,1)$ asymptote of Theorem~1 in~\cite{AuffingerCable2017} (dashed $y = x$). The empirical Q-Q line is nearly affine; cf.~Figure~\ref{fig:critical_scaling}.}
\end{figure}

\begin{table}[t]
\caption{\label{tab:critical_summary}Summary statistics of $\XN$ for the critical min-plus process, $M=10^{5}$ samples per depth. The last two columns compare $\widehat{\PP}(\XN{=}1)$ with the recursion of Lemma~\ref{lem:px1}\,(ii) from $y_{0} = 1$; the Monte Carlo standard error is $\sigma \leq 9 \times 10^{-4}$.}
\centering
\setlength{\tabcolsep}{4pt}
\begin{tabular}{rrrrrcc}
\toprule
$N$ & $\EE[\XN]$ & $\mathrm{Var}(\XN)$ & $\max\XN$ & $\widehat{\PP}(\XN{=}1)$ & $y_{N}$ \\
\midrule
$20$ &     $93.4$ &  $1.79 \times 10^{4}$ &   $1{,}376$ & $0.0791$ & $0.0801$ \\
$24$ &    $171.4$ &  $6.95 \times 10^{4}$ &   $3{,}119$ & $0.0699$ & $0.0687$ \\
$28$ &    $304.7$ &  $2.44 \times 10^{5}$ &   $5{,}445$ & $0.0609$ & $0.0601$ \\
$32$ &    $526.7$ &  $7.83 \times 10^{5}$ &   $9{,}504$ & $0.0525$ & $0.0535$ \\
$36$ &    $866.4$ &  $2.33 \times 10^{6}$ &  $20{,}182$ & $0.0474$ & $0.0482$ \\
$40$ &  $1{,}418$ &  $6.70 \times 10^{6}$ &  $33{,}783$ & $0.0447$ & $0.0439$ \\
$44$ &  $2{,}271$ &  $1.83 \times 10^{7}$ &  $58{,}043$ & $0.0406$ & $0.0403$ \\
$48$ &  $3{,}537$ &  $4.66 \times 10^{7}$ &  $80{,}732$ & $0.0376$ & $0.0372$ \\
$52$ &  $5{,}506$ &  $1.18 \times 10^{8}$ & $150{,}243$ & $0.0349$ & $0.0346$ \\
$56$ &  $8{,}449$ &  $2.90 \times 10^{8}$ & $225{,}866$ & $0.0319$ & $0.0323$ \\
$60$ & $12{,}698$ &  $6.80 \times 10^{8}$ & $409{,}488$ & $0.0301$ & $0.0303$ \\
\bottomrule
\end{tabular}
\end{table}


\section{The off-critical min-plus process}
\label{sec:noncritical}

Auffinger and Cable \cite[Theorem~2]{AuffingerCable2017} prove that if $p<1/2$ the sequence $\XN$ converges in distribution to a non-trivial random variable $X_{\infty}(p)$, with $(\XN)_{N \geq 1}$ tight. As a special case of Lemma~\ref{lem:px1}\,(i) above (or by \cite[Lemma~16]{AuffingerCable2017}), one has the exact closed form
\begin{equation}
\label{eq:auffinger_p1}
\PP(X_{\infty}(p) = 1) = 1 - \frac{p}{1-p} = \frac{1-2p}{1-p}, \quad p < 1/2.
\end{equation}
The case $p > 1/2$ is qualitatively different: $\EE[\XN] \geq (2p)^{N}$ grows exponentially in $N$, and Lemma~\ref{lem:px1}\,(iii) shows that the mass at $1$ is wiped out at the geometric rate $\bigl(2(1-p)\bigr)^{N}$. The natural variable is then a rescaled $\XN/c_{N}$ for some norming sequence $c_{N}$, which we do not pursue here.

\subsection{The sub-critical regime}
\label{sec:subcritical}

We simulated the sub-critical min-plus process for $p = 0.10, 0.20, 0.30, 0.40$ on trees of effective depth $N=48$, with $M=10^{5}$ Monte Carlo samples per parameter, using the precomputed-leaves machinery of Appendix~\ref{sec:methods} with $K=35$ (so the simulator runs at depth $N - K = 13$). We do not push to $N=60$ because in the sub-critical regime Lemma~\ref{lem:px1}\,(i) gives geometric convergence to the closed form $y^{*} = (1-2p)/(1-p)$ at rate $(2p)^{N}$, so by $N=48$ we are already within $\sim (2p)^{48}$ of the limit---some $10^{-5}$ for $p = 0.4$ and exponentially less for smaller $p$, well below our Monte Carlo resolution. The empirical distributions $\hat{f}(k \mid \XN, p)$ are shown in Figure~\ref{fig:noncritical}, with summary statistics in Table~\ref{tab:noncritical}.

The data are consistent with the tightness established in~\cite{AuffingerCable2017}. The observed support of $\hat{f}$ shrinks rapidly as $p$ decreases, with the mean and variance of $\XN$ both converging to small finite limits well within our simulated depths (Figure~\ref{fig:noncritical}). The empirical $\widehat{\PP}(\XN=1)$ values agree with the closed form (\ref{eq:auffinger_p1}) to within one Monte Carlo standard error ($\sigma \leq 1.6 \times 10^{-3}$) at every $p$ on our grid (Table~\ref{tab:noncritical}, second and third columns). The mean $\widehat{\EE}[\XN]$ rises by nearly an order of magnitude across the sub-critical interval, from $\approx 1.14$ at $p = 0.1$ to $\approx 7.84$ at $p = 0.4$; combined with the geometric decay of $\widehat{\PP}(\XN \geq k)$ in Figure~\ref{fig:noncritical}, this is consistent with $X_{\infty}(p)$ being approximately geometric with tail ratio approaching $1$ as $p \nearrow 1/2$. A thorough characterization of $X_{\infty}(p) \mid X_{\infty}(p) > 1$ across the sub-critical range is beyond our scope.

\begin{figure}[t]
\centering
\includegraphics[width=0.6\linewidth]{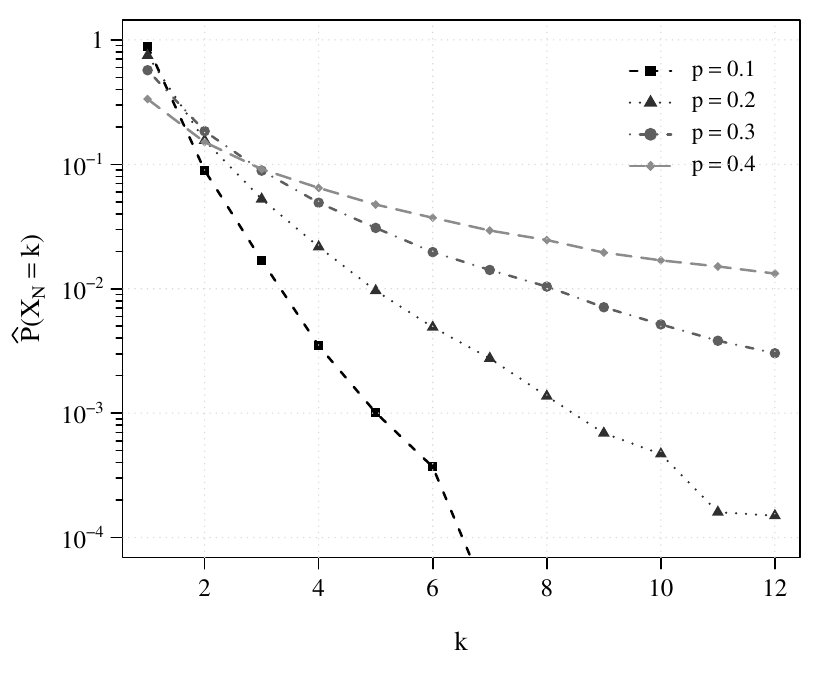}
\caption{\label{fig:noncritical}Empirical distribution $\hat{f}(k \mid \XN, p)$ of the root value of the sub-critical min-plus process for $p = 0.10$, $0.20$, $0.30$, $0.40$, at effective tree depth $N=48$ and $M=10^{5}$ samples per parameter. Vertical axis on a logarithmic scale.}
\end{figure}

\begin{table}[t]
\caption{\label{tab:noncritical}Moments of $\XN$ for the sub-critical min-plus process at $N=48$, $M=10^{5}$ samples per $p$. The second and third columns give $\widehat{\PP}(\XN{=}1)$ and the closed form $(1-2p)/(1-p)$ from Lemma~\ref{lem:px1}\,(i)/eq.~(\ref{eq:auffinger_p1}).}
\centering
\setlength{\tabcolsep}{4pt}
\begin{tabular}{cccccc}
\toprule
$p$ & $\widehat{\PP}(\XN{=}1)$ & $(1{-}2p)/(1{-}p)$ & $\widehat{\EE}[\XN]$ & $\widehat{\mathrm{Var}}(\XN)$ & $\max\XN$ \\
\midrule
$0.10$ & $0.8886$ & $0.8889$ & $1.140$ & $0.199$ & $11$  \\
$0.20$ & $0.7498$ & $0.7500$ & $1.430$ & $0.966$ & $20$  \\
$0.30$ & $0.5708$ & $0.5714$ & $2.261$ & $6.55$  & $53$  \\
$0.40$ & $0.3344$ & $0.3333$ & $7.839$ &   $251$ & $511$ \\
\bottomrule
\end{tabular}
\end{table}


\subsection{The super-critical regime}
\label{sec:supercritical}

For $p > 1/2$ Lemma~\ref{lem:px1}\,(iii) implies two simultaneous behaviors: the mass at $1$ decays geometrically, $\PP(\XN=1) = O((2(1-p))^{N})$, while the mean grows at least exponentially, $\EE[\XN] \geq (2p)^{N}$. To document the trichotomy of Lemma~\ref{lem:px1} on its third regime, we ran a complementary sweep at $p = 0.6,\, 0.7$ and effective depths $N = 32,\, 40,\, 48$, $M=10^{5}$ Monte Carlo samples per $(p, N)$, using the precomputed-leaves machinery of Appendix~\ref{sec:methods} with $K=20$ for $N=32$ and $K=35$ for $N=40, 48$. Summary statistics are collected in Table~\ref{tab:supercritical}.

\begin{table}[t]
\caption{\label{tab:supercritical}Summary statistics of $\XN$ for the super-critical min-plus process at $p = 0.6,\, 0.7$ and $N = 32,\, 40,\, 48$, $M=10^{5}$ samples per $(p, N)$. The third column $(2p)^{N}$ is the lower bound on $\EE[\XN]$ from Lemma~\ref{lem:px1}\,(iii). The theoretical geometric decay of $\widehat{\PP}(\XN{=}1)$ is $(2(1-p))^{N}$: resolvable on the MC scale for $p=0.6$ and well below resolution for $p=0.7$.}
\centering
\setlength{\tabcolsep}{4pt}
\begin{tabular}{cccccc}
\toprule
$p$ & $N$ & $(2p)^{N}$ & $\widehat{\EE}[\XN]$ & $\widehat{\mathrm{Var}}(\XN)$ & $\widehat{\PP}(\XN{=}1)$ \\
\midrule
$0.6$ & $32$ & $3.4 \times 10^{2}$ & $8.63 \times 10^{4}$ & $6.31 \times 10^{9}$  & $1.5 \times 10^{-4}$ \\
$0.6$ & $40$ & $1.5 \times 10^{3}$ & $1.29 \times 10^{6}$ & $1.42 \times 10^{12}$ & $4.0 \times 10^{-5}$ \\
$0.6$ & $48$ & $6.3 \times 10^{3}$ & $1.91 \times 10^{7}$ & $3.10 \times 10^{14}$ & $< 10^{-5}$ \\
$0.7$ & $32$ & $4.7 \times 10^{4}$ & $3.62 \times 10^{6}$ & $5.19 \times 10^{12}$ & $< 10^{-5}$ \\
$0.7$ & $40$ & $7.0 \times 10^{5}$ & $2.33 \times 10^{7}$ & $2.03 \times 10^{14}$ & $< 10^{-5}$ \\
$0.7$ & $48$ & $1.0 \times 10^{7}$ & $9.75 \times 10^{8}$ & $3.71 \times 10^{17}$ & $< 10^{-5}$ \\
\bottomrule
\end{tabular}
\end{table}

The empirical growth exponent $\widehat{r}_{N}(p) \deq \log{\widehat{\EE}[\XN]}/N$ from Table~\ref{tab:supercritical} is well above the lemma's lower-bound exponent $\log{(2p)}$ at every depth in our grid: for $p=0.6$, $\widehat{r}_{N} = 0.355$, $0.352$, $0.349$ at $N = 32, 40, 48$ (versus $\log{(2p)} = 0.182$); for $p=0.7$, $\widehat{r}_{N} = 0.472$, $0.424$, $0.431$ (versus $\log{(2p)} = 0.337$). The empirical ratio $\widehat{\EE}[\XN]/(2p)^{N}$ increases with $N$ (from $252$ at $N=32$ to $3{,}018$ at $N=48$ for $p=0.6$), so the lower bound $\EE[\XN] \geq (2p)^{N}$ of Lemma~\ref{lem:px1}\,(iii)---which considers only the all-`$+$' realization---is not asymptotically tight on this grid; the empirical exponent settles near $\sim 0.35$ for $p=0.6$ and $\sim 0.43$ for $p=0.7$ at the depths simulated.

For $p=0.6$, the empirical $\widehat{\PP}(\XN=1)$ values $1.5 \times 10^{-4}$, $4.0 \times 10^{-5}$, and $0$ at $N = 32, 40, 48$ are at roughly $0.19$, $0.30$, and $0$ times the theoretical geometric values $0.8^{N} \approx 7.9 \times 10^{-4}$, $1.3 \times 10^{-4}$, $2.2 \times 10^{-5}$, consistent in order of magnitude with the $O\bigl((2(1-p))^{N}\bigr)$ decay of Lemma~\ref{lem:px1}\,(iii) with a constant below $1$; the $N=48$ count of $0$ out of $M=10^{5}$ samples is consistent with the expected count of $2.2$ (Poisson zero-probability $e^{-2.2} \approx 0.11$). For $p=0.7$, the theoretical decay $0.6^{N}$ is already $\sim 8 \times 10^{-8}$ at $N=32$ and smaller for larger $N$, well below MC resolution; the empirical $\widehat{\PP}(\XN=1) = 0$ at all three depths is consistent with the faster theoretical decay but does not estimate the rate.


\subsection{The near-critical scaling window}
\label{sec:nearcritical}

The fixed-$p$ dichotomy above---tightness for $p < 1/2$ and exponential growth for $p > 1/2$---motivates a near-critical experiment. For each effective depth $N$, we fix $p = p_{N} = 1/2 - c/N$ throughout the tree. On the super-critical side, the analogous series-parallel distance has growth exponent $\alpha(1/2+\eps) \asymp \sqrt{\eps}$, eq.~(\ref{eq:cdds}). Matching $N\sqrt{\eps}$ to the critical $\sqrt{N}$ scale selects $\eps \asymp N^{-1}$, motivating the sub-critical window considered here.

We simulated the process with all-ones leaves at $c  = 1/2, 1, 2$ and $N = 24, 32, 40$, $48, 56, 60$, and $M = 10^{5}$ samples per cell, using the machinery of Appendix~\ref{sec:methods} with a separate leaf table per $(c,N)$ (the level recursion depends on $p$; $K = 20$ for $N \leq 32$, $K = 35$ otherwise) and the critical runs of Section~\ref{sec:critical} as the $c = 0$ baseline. The $c$-doubling pairs $(1/2, 24)$--$(1, 48)$ and $(1, 24)$--$(2, 48)$ visit the same $p_{N}$ ($23/48$ and $11/24$, respectively) at two different depths, and so discriminate whether the finite-depth behavior depends on $p_{N}$ alone or on the window parameter $c = N(1/2 - p_{N})$.

Table~\ref{tab:nearcritical} and Figure~\ref{fig:nearcritical} collect the rescaled means $\widehat{A}(c, N) \deq \widehat{\EE}[\log{\XN}]/\sqrt{N}$. Across the sampled depths, each $c$-column follows the critical column nearly in parallel. The deficit $\widehat{\varphi}(c, N) \deq \widehat{A}(0, N) - \widehat{A}(c, N)$ has column averages $0.159$, $0.291$, and $0.471$ for $c = 1/2$, $1$, and $2$, respectively, and varies by at most about $5\%$. Thus the data are consistent with $\EE[\log{\XN}] \approx \EE[\log{\XN}]_{\mathrm{crit}} - \varphi(c)\sqrt{N}$. The equal-$p_{N}$ pairs instead have markedly different deficits, showing that $p_{N}$ alone does not determine the finite-depth behavior and supporting $c = N(1/2 - p_{N})$ as the relevant scaling variable.

\begin{figure}[t]
\centering
\includegraphics[width=\linewidth]{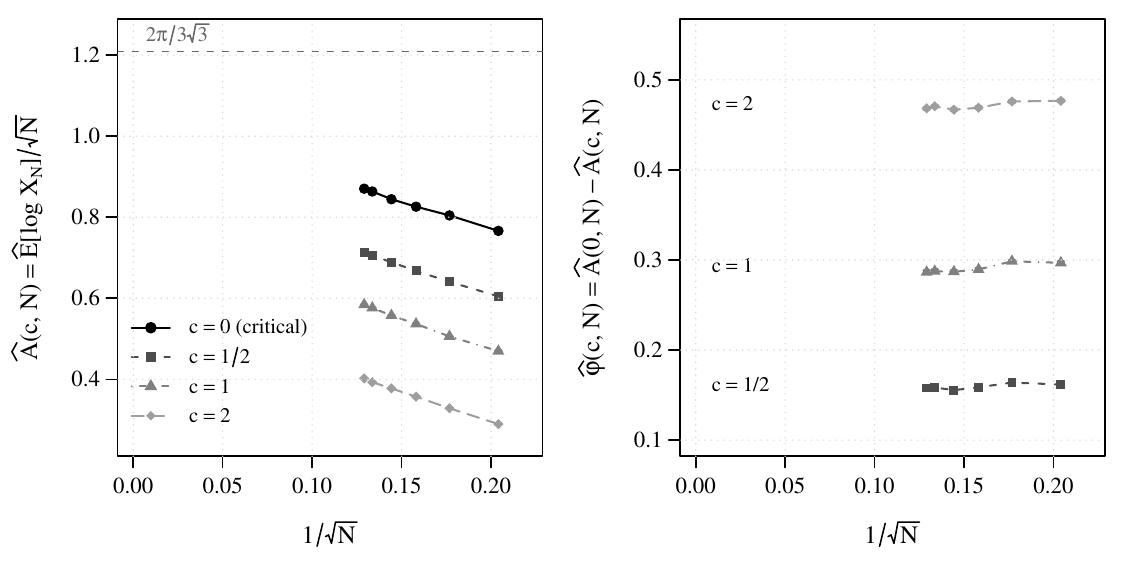}
\caption{\label{fig:nearcritical}The near-critical window $p_{N} = 1/2 - c/N$ at $c \in \{1/2, 1, 2\}$, effective depths $N = 24,\, 32,\, 40,\, 48,\, 56,\, 60$, $M = 10^{5}$ samples per cell, with the critical runs of Section~\ref{sec:critical} as the $c = 0$ baseline. \emph{Left:} rescaled mean $\widehat{A}(c, N) = \widehat{\EE}[\log{\XN}]/\sqrt{N}$ versus $1/\sqrt{N}$; the dashed line marks the critical asymptote $2\pi/3\sqrt{3}$ of (\ref{eq:AClim}). \emph{Right:} deficit $\widehat{\varphi}(c, N) = \widehat{A}(0, N) - \widehat{A}(c, N)$, nearly $N$-independent over the sampled grid, with a total spread of at most about $5\%$ of its column average. Error bars ($1$ SE $\leq 2 \times 10^{-3}$) are smaller than the markers in both panels.}
\end{figure}

\begin{table}[t]
\caption{\label{tab:nearcritical}Rescaled mean $\widehat{A}(c, N) = \widehat{\EE}[\log{\XN}]/\sqrt{N}$ for the near-critical min-plus process at $p_{N} = 1/2 - c/N$, $M = 10^{5}$ samples per cell; the $c = 0$ column is the critical baseline of Section~\ref{sec:critical}. The Monte Carlo standard error is $\leq 1.3 \times 10^{-3}$ in every cell. Note the equal-$p_{N}$ pairs $(c, N) = (1/2, 24)$--$(1, 48)$ and $(1, 24)$--$(2, 48)$.}
\centering
\setlength{\tabcolsep}{4pt}
\begin{tabular}{ccccc}
\toprule
$N$ & $c=0$ & $c=1/2$ & $c=1$ & $c=2$ \\
\midrule
$24$ & $0.7663$ & $0.6047$ & $0.4695$ & $0.2895$ \\
$32$ & $0.8045$ & $0.6407$ & $0.5060$ & $0.3285$ \\
$40$ & $0.8261$ & $0.6675$ & $0.5366$ & $0.3570$ \\
$48$ & $0.8444$ & $0.6890$ & $0.5575$ & $0.3775$ \\
$56$ & $0.8636$ & $0.7053$ & $0.5762$ & $0.3928$ \\
$60$ & $0.8705$ & $0.7127$ & $0.5841$ & $0.4021$ \\
\bottomrule
\end{tabular}
\end{table}

The atom at $1$ admits an analytic description in this window. With $p = p_{N}$ held fixed throughout a depth-$N$ tree, a standard Euler-scheme scaling of the exact recursion (\ref{eq:y-rec}) gives
\[
N\PP(\XN = 1) \longrightarrow \frac{4c}{1-e^{-2c}}.
\]
Indeed, writing $y_{\lfloor Nt\rfloor} \sim u(t)/N$ yields $u' = 2cu-u^{2}/2$ with entrance condition $u(0+) = \infty$, and hence $u(t) = 4c/(1-e^{-2ct})$. The coefficient tends to $2$ as $c \downarrow 0$, recovering the critical asymptotic of Lemma~\ref{lem:px1}\,(ii), and is asymptotic to $4c$ as $c \to \infty$, matching the fixed-$p$ equilibrium $y^{*} \sim 4c/N$. The empirical atoms at $N = 60$ are consistent with the exact recursion: $0.1256$ versus $0.1265$ at $c = 2$, and $0.0498$ versus $0.0491$ at $c = 1/2$.

For the logarithmic mean, the grid does not settle the functional form of $\varphi$: doubling $c$ multiplies the deficit by $\approx 1.83$ and then $\approx 1.62$, between the $\sqrt{c}$ of the super-critical heuristic and linear growth. Whether $\widehat{A}(c, N) \to 2\pi/3\sqrt{3} - \varphi_{\infty}(c)$ as $N \to \infty$, and how $\varphi_{\infty}$ relates to (\ref{eq:cdds}), are open questions to which we return in Section~\ref{sec:summary}.


\section{The min-plus process with Bernoulli leaves}
\label{sec:variant}

We can extend the min-plus process by introducing Bernoulli initial conditions. In this variant, the initial values $X_{0}(i)$ become \iid\ distributed as $\PP(X_{0}(i)=0)=q$ and $\PP(X_{0}(i)=1)=1-q$, with $0 \leq q \leq 1$. When $p=1/2$ and $q=0$ we recover the critical min-plus process of Section~\ref{sec:critical}.

As $q \to 1$ the leaves become almost surely $0$, so $\XN \to 0$ deterministically. The behavior for $q$ in between is governed by the following lemma, which fully determines the order parameter $\PP(\XN=0)$ at $p=1/2$ and locates the absorbing-state phase transition in $p$ rather than in $q$.

\begin{lemma}
\label{lem:px0}
Let $X_{0}(i)$ be \iid\ nonnegative integer random variables with $\PP(X_{0}(i)=0) = q \in [0,1]$, and let $x_{n} \deq \PP(X_{n}=0)$. Then $x_{n}$ obeys the deterministic recursion
\begin{equation}
\label{eq:bern-rec}
x_{n+1} = 2(1-p)\,x_{n} + (2p-1)\,x_{n}^{2}, \quad x_{0} = q.
\end{equation}
For $p \neq 1/2$, the only fixed points of $x \mapsto 2(1-p)x + (2p-1)x^{2}$ in $[0,1]$ are $x = 0$ and $x = 1$. At $p=1/2$ the map is the identity, so every $x \in [0,1]$ is fixed. The dynamics in the three regimes are:
\begin{enumerate}
\item[(i)] At $p=1/2$ the recursion is the identity, so $\PP(\XN=0) = q$ for every $N \geq 0$.
\item[(ii)] For $0 < p < 1/2$ the fixed point $x = 1$ is locally stable with $1 - x_{n+1} \sim 2p(1 - x_{n})$, so for any $q > 0$, $\PP(\XN=0) \to 1$ at the geometric rate $1 - \PP(\XN=0) = O((2p)^{N})$.
\item[(iii)] For $1/2 < p < 1$ the fixed point $x = 0$ is locally stable with $x_{n+1} \sim 2(1-p)\,x_{n}$, so for any $q < 1$, $\PP(\XN=0) \to 0$ at the geometric rate $\PP(\XN=0) = O((2(1-p))^{N})$.
\end{enumerate}
\end{lemma}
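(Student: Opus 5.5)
The plan follows the proof of Lemma~\ref{lem:px1}: condition on the operator at the parent node and use that the two children $X_{n}^{(1)}, X_{n}^{(2)}$ are independent copies of $X_{n}$. This holds because the leaves are \iid\ and the operators are chosen independently, so the two subtrees are disjoint and independent. The key observation is that all values are nonnegative integers, which holds by induction since both $+$ and $\min$ preserve nonnegativity.

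Under `$+$', the event $X_{n}^{(1)}+X_{n}^{(2)}=0$ occurs iff both children vanish. This has probability $x_{n}^{2}$. Under `$\min$', the event $\min(X_{n}^{(1)},X_{n}^{(2)})=0$ occurs iff at least one child vanishes. This has probability $1-(1-x_{n})^{2}=2x_{n}-x_{n}^{2}$. Hence
\[
x_{n+1}=p\,x_{n}^{2}+(1-p)(2x_{n}-x_{n}^{2}) = 2(1-p)x_{n}+(2p-1)x_{n}^{2},
\]
and $x_{0}=q$ by assumption. Only $\PP(X_{0}=0)$ enters, so the rest of the leaf law is irrelevant; this is why the statement allows arbitrary nonnegative leaves.

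For the fixed points, write $g(x)=2(1-p)x+(2p-1)x^{2}$. Then $g(x)-x=(1-2p)x+(2p-1)x^{2}=(1-2p)\,x(1-x)$. For $p\neq 1/2$ this vanishes exactly at $x\in\{0,1\}$, and for $p=1/2$ it vanishes identically, which gives (i) at once by induction. The same factorization gives the global dynamics rather than only local stability. If $p<1/2$, then $g(x)>x$ on $(0,1)$, and $g$ maps $[0,1]$ into itself because it is increasing there with $g(0)=0$ and $g(1)=1$. So for $q>0$ the sequence $x_{n}$ increases monotonically to a fixed point in $(0,1]$, which must be $1$. If $p>1/2$, then $g(x)<x$ on $(0,1)$, and for $q<1$ the sequence decreases monotonically to $0$. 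The endpoint cases $q=0$ and $q=1$ are trivial fixed points, which explains the restrictions on $q$ in (ii) and (iii).

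For the rates, compute $g'(x)=2(1-p)+2(2p-1)x$, so that $g'(0)=2(1-p)$ and $g'(1)=2p$. In case (ii), set $e_{n}=1-x_{n}$. Then exactly $e_{n+1}=e_{n}\bigl(2p+(1-2p)e_{n}\bigr)$. Since $e_{n}\to 0$, the bracket tends to $2p$, and so $e_{n}\le C(2p+\delta)^{n}$ for any $\delta>0$. To obtain the clean $O((2p)^{N})$ bound, I would note that $\sum_{n} e_{n}<\infty$ (the $e_{n}$ decay geometrically), so $\prod_{n}\bigl(1+\tfrac{1-2p}{2p}e_{n}\bigr)$ converges. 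In case (iii), the identity $x_{n+1}=x_{n}\bigl(2(1-p)-(2p-1)(1-x_{n})\bigr)$ is less convenient. Instead I would use $x_{n+1}=x_{n}\bigl(2(1-p)+(2p-1)x_{n}\bigr)$, whose bracket tends to $2(1-p)$, and conclude with the same summable-product argument. This product step is the only mildly delicate point. Everything else is the routine algebra above, so I expect the main work to be making the constant in the geometric rate explicit rather than just exhibiting a rate $(2p+\delta)^{N}$.
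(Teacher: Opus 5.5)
Your proof is correct. The derivation of the recursion (condition on the operator, with $x_{n}^{2}$ for `$+$' and $2x_{n}-x_{n}^{2}$ for `$\min$') and the fixed-point factorization $g(x)-x=(1-2p)\,x(1-x)$ are exactly the paper's argument.

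Where you depart from the paper is in the dynamics. The paper only reads off $g'(0)=2(1-p)$ and $g'(1)=2p$ and appeals to linearization near each fixed point. By itself that gives local stability and the asymptotic ratios $1-x_{n+1}\sim 2p(1-x_{n})$ and $x_{n+1}\sim 2(1-p)x_{n}$. It does not show convergence from every $q\in(0,1)$, nor the clean $O((2p)^{N})$ bound without a $\delta$-loss. You supply both:
\begin{itemize}
\item the sign of $g(x)-x$ together with the monotonicity of $g$ on $[0,1]$ gives global convergence;
\item the exact factorizations $e_{n+1}=e_{n}(2p+(1-2p)e_{n})$ and $x_{n+1}=x_{n}(2(1-p)+(2p-1)x_{n})$, with a convergent product, give the sharp rate with an explicit constant.
\end{itemize}
So your version is the more complete proof of (ii)--(iii).

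You can also drop the intermediate $(2p+\delta)^{n}$ step. Since $e_{n}$ is nonincreasing, every bracket is at most $2p+(1-2p)e_{0}<1$ once $q>0$, so summability is immediate. Likewise, in case (iii) every bracket is at most $2(1-p)+(2p-1)q<1$ once $q<1$.

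The paper also remarks on the endpoints $p=0$ and $p=1$, where $1-x_{N}=(1-q)^{2^{N}}$ and $x_{N}=q^{2^{N}}$ converge supergeometrically. These lie outside the parameter ranges of (ii)--(iii), so you need not treat them.

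One harmless slip: the identity you set aside in case (iii), $x_{n+1}=x_{n}\bigl(2(1-p)-(2p-1)(1-x_{n})\bigr)$, is false. Rewriting the bracket in terms of $1-x_{n}$ actually gives $x_{n+1}=x_{n}\bigl(1-(2p-1)(1-x_{n})\bigr)$.
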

\begin{proof}
Conditioning on the operator at the parent of the nodes $X_{n}^{(1)}, X_{n}^{(2)}$,
\[
\PP(X_{n+1} = 0) = p\PP(X_{n}^{(1)}+X_{n}^{(2)} = 0) + (1-p)\PP(\min(X_{n}^{(1)}, X_{n}^{(2)}) = 0).
\]
Since the $X_{n}$ are nonnegative integers, $X_{n}^{(1)} + X_{n}^{(2)} = 0$ iff both summands vanish, with probability $x_{n}^{2}$, while $\min(X_{n}^{(1)}, X_{n}^{(2)}) = 0$ iff at least one of them does, with probability $1 - (1-x_{n})^{2} = 2x_{n} - x_{n}^{2}$. Hence
\[
x_{n+1} = p\,x_{n}^{2} + (1-p)(2x_{n} - x_{n}^{2}) = 2(1-p)\,x_{n} + (2p-1)\,x_{n}^{2},
\]
which is (\ref{eq:bern-rec}). The fixed-point equation factorizes as $(2p-1)\,x(1-x)=0$, so for $p \neq 1/2$ the only fixed points are $0$ and $1$, while at $p=1/2$ every $x \in [0,1]$ is fixed. The stated geometric rates follow from $f'(0) = 2(1-p)$ and $f'(1) = 2p$ in the linearizations near each fixed point. At the endpoints, $1-x_{N} = (1-q)^{2^{N}}$ for $p=0$ and $x_{N} = q^{2^{N}}$ for $p=1$; the convergence is thus supergeometric.
\end{proof}

\begin{remark}
\label{rmk:duality}
Lemmas~\ref{lem:px1} and~\ref{lem:px0} are dual one-dimensional recursions for the marginals $\PP(X_{n} = 1)$ and $\PP(X_{n} = 0)$, respectively, both obtained by the same conditioning argument. They share the linearization slope $f'(0) = 2(1-p)$ at the lower fixed point and consequently exhibit the same change of stability at $p_{c} = 1/2$, which is the structural origin of the absorbing-state phase transition in this section. The two lemmas differ in their upper structure because $0$ is the additive identity but $1$ is not. In Lemma~\ref{lem:px0} both the `$\min$' and the `$+$' branches preserve the event $\{X_{n} = 0\}$, and at $p=1/2$ their convex combination collapses to the identity, so $\PP(X_{n} = 0) = q$ is exactly conserved. In Lemma~\ref{lem:px1} only the `$\min$' branch preserves $\{X_{n} = 1\}$ while the `$+$' branch acts as a sink, so the recursion at $p=1/2$ contracts polynomially toward $0$ as $2/n$ rather than freezing.
\end{remark}

Since $q$ plays no role beyond setting the limit value along the critical line $p = 1/2$, our sweeps in this section traverse it rather than crossing a transition in $q$; the immediate empirical questions are how cleanly the data lie on the exact line $\PP(\XN=0) = q$ and what the conditional law $\XN \mid \XN > 0$ looks like along it.

To verify the identity of Lemma~\ref{lem:px0}\,(i) we ran a sweep of $M=10^{5}$ Monte Carlo samples at effective depth $N=48$ over $q = 0$, $0.05$, $0.10$, \ldots, $0.95$, $1.00$, using the precomputed-leaves machinery of Appendix~\ref{sec:methods} with $K=35$ (one $\mu_{K}$ table per $q$, simulator at depth $N-K = 13$); see Figure~\ref{fig:bern_p0}. The empirical points lie on the diagonal $\PP(\XN=0) = q$ across the full unit interval, with a maximum residual of $3.6 \times 10^{-3}$ consistent with the $\sqrt{q(1-q)/M} \leq 1.6 \times 10^{-3}$ Monte Carlo standard error per point at $M=10^{5}$.

While the order parameter is determined exactly by Lemma~\ref{lem:px0}, the conditional distribution of $\XN$ given $\XN > 0$ at $p=1/2$ remains non-trivial. Figure~\ref{fig:bern_moments} and Table~\ref{tab:bern_n48} report the moments of $\XN$ as a function of $q$. The mean drops by more than four orders of magnitude across the interval $[0,\,1]$, from $\approx 3{,}570$ at $q=0$ to $\approx 0.12$ at $q=0.95$ and exactly $0$ at $q=1$, even though the order parameter $\PP(\XN=0)$ moves only linearly from $0$ to $1$. The collapse is concentrated in the conditional law itself: the rescaled mean $\widehat{\EE}[\log{\XN} \mid \XN > 0]/\sqrt{cN}$ at $N=48$ falls smoothly and monotonically from the all-ones critical value $0.466$ at $q=0$ to $0.290$ at $q=1/2$ and $0.050$ at $q=0.95$. The empirical profile lies between the naive $(1-q)$- and $\sqrt{1-q}$-rescalings of the $q=0$ profile and matches neither.

To check that the observed $q$-dependence of the conditional law is not merely an artifact of the single depth $N=48$, we ran a complementary sweep at $q = 0.25, 0.50, 0.75$ and effective depths $N = 36, 48, 60$ (using $K = 35$ leaf tables, $M = 10^{5}$ samples per $(q, N)$). The conditional rescaled means $\widehat\EE[\log{\XN} \mid \XN > 0]/\sqrt{cN}$ collected in Table~\ref{tab:bern_nstability} preserve the $q$-ordering and inter-curve spacing essentially unchanged across the three depths: the spread between $q=0.25$ and $q=0.75$ is $\approx 0.20$ at all three $N$, while each row drifts upward with $N$ by $\sim 0.02$--$0.03$ over $N = 36 \to 60$, comparable to the corresponding drift of the all-ones critical case at the same depths.

\begin{table}[t]
\caption{\label{tab:bern_nstability}Conditional rescaled mean $\widehat\EE[\log{\XN} \mid \XN > 0]/\sqrt{cN}$ for the Bernoulli variant at $p = 1/2$, $K = 35$, $M = 10^{5}$ samples per $(q, N)$.}
\centering
\setlength{\tabcolsep}{8pt}
\begin{tabular}{cccc}
\toprule
$q$ & $N = 36$ & $N = 48$ & $N = 60$ \\
\midrule
$0.25$ & $0.367$ & $0.384$ & $0.396$ \\
$0.50$ & $0.275$ & $0.290$ & $0.301$ \\
$0.75$ & $0.166$ & $0.177$ & $0.184$ \\
\bottomrule
\end{tabular}
\end{table}

The recursion (\ref{eq:bern-rec}) extends without change to arbitrary nonnegative-integer initial distributions $X_{0}(i)$ with $\PP(X_{0}(i)=0) = q$, since the proof depends only on the marginal mass at $0$. So at $p=1/2$ the order parameter $\PP(\XN=0)$ is universal in the leaf distribution and depends only on $q$. The corresponding question for the conditional law $\XN \mid \XN > 0$---whether it inherits the same leaf universality---is sharpened by the random-homogeneous-systems framework of~\cite{ChenDuquesneShi2025,Morfe2025}: their classifying parameters $(\alpha_{+},\alpha_{-},p) = (\alpha_{+},0,1/2)$ depend on the operator distribution but not on the leaf marginal, and their Conjecture~1.12(b) predicts that the asymmetric $\sqrt{N}$ regime with $\Beta(2,1)$ limit should persist for any nonnegative-integer leaves with $q < 1$. The finite-$N$ deformation visible in Table~\ref{tab:bern_n48} and Figure~\ref{fig:bern_moments} is then either slow convergence to that limit at depths beyond our reach---consistent with the conjecture---or evidence that the asymmetric class is finer than the operator-marginal classification implies. Whether heavier-tailed leaves split the class into a tail-dependent family of limit laws analogous to the $\chi(\alpha) = 1/(\alpha+2)$ exponents of~\cite{HuShi2018} in the Derrida--Retaux setting is a related but distinct question that we leave to future work.

\begin{figure}[t]
\centering
\includegraphics[width=0.6\linewidth]{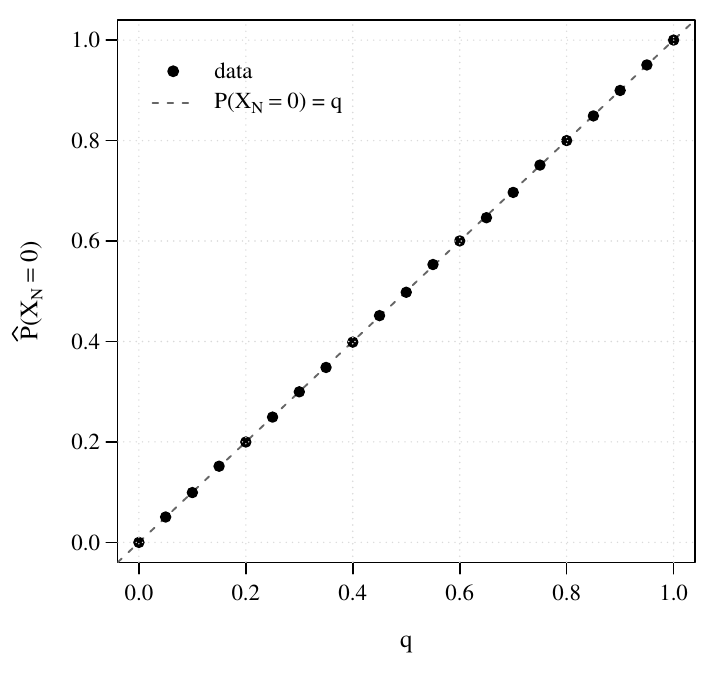}
\caption{\label{fig:bern_p0}Empirical confirmation of the identity $\PP(\XN=0) = q$ from Lemma~\ref{lem:px0}~(i): the order parameter $\widehat{\PP}(\XN=0)$ for the Bernoulli variant at $p=1/2$, effective depth $N=48$, $M=10^{5}$ samples per $q$, plotted against the leaf-zero probability $q$. The dashed diagonal is the exact line. Empirical points lie on the diagonal across $[0,1]$ to within Monte Carlo error.}
\end{figure}

\begin{figure}[t]
\centering
\includegraphics[width=0.6\linewidth]{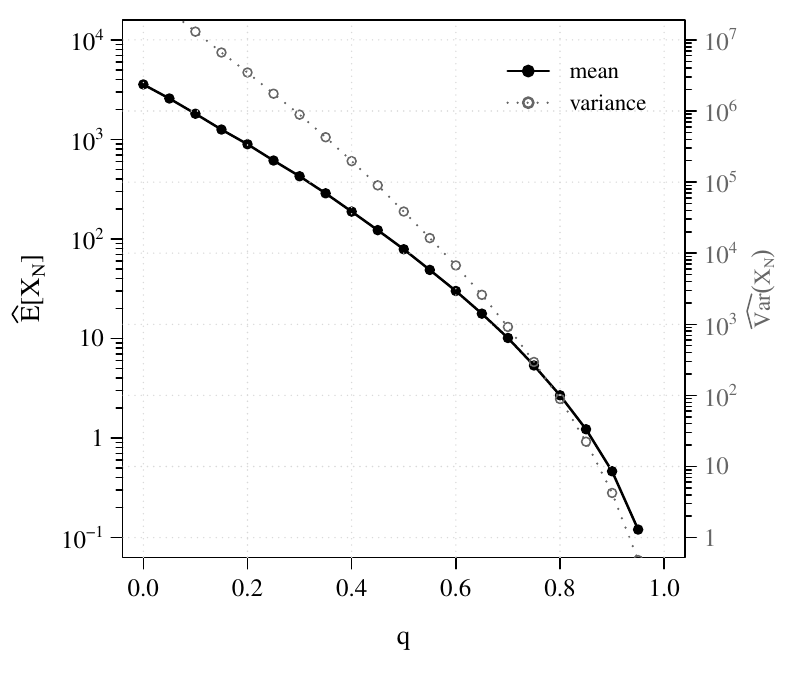}
\caption{\label{fig:bern_moments}Mean (left axis) and variance (right axis) of $\XN$ for the Bernoulli variant at $p=1/2$ and $N=48$, as a function of $q$. Both quantities decay by several orders of magnitude across the unit interval and vanish at $q=1$ (not shown).}
\end{figure}

\begin{table}[t]
\caption{\label{tab:bern_n48}Summary statistics of the Bernoulli variant at $p=1/2$, $N=48$, $M=10^{5}$ samples per $q$. The column $\widehat{\PP}(\XN=0)$ is the empirical order parameter; Lemma~\ref{lem:px0}~(i) gives $\PP(\XN=0) = q$ exactly; the empirical values agree within Monte Carlo error ($\sigma \approx 1.6 \times 10^{-3}$).}
\centering
\setlength{\tabcolsep}{4pt}
\begin{tabular}{cccccc}
\toprule
$q$ & $\widehat{\PP}(\XN=0)$ & $\widehat{\EE}[\XN]$ & $\widehat{\mathrm{Var}}(\XN)$ & $\max\XN$ & $\widehat{\EE}[\XN \mid \XN > 0]$ \\
\midrule
$0.00$ & $0.0000$ & $3568.4$ &  $4.69 \times 10^{7}$ & $86{,}768$ & $3568.4$ \\
$0.05$ & $0.0505$ & $2578.8$ &  $2.55 \times 10^{7}$ & $61{,}403$ & $2716.1$ \\
$0.10$ & $0.0992$ & $1812.4$ &  $1.31 \times 10^{7}$ & $47{,}565$ & $2012.0$ \\
$0.15$ & $0.1517$ & $1257.4$ &  $6.65 \times 10^{6}$ & $40{,}018$ & $1482.2$ \\
$0.20$ & $0.1999$ & $895.0$  &  $3.48 \times 10^{6}$ & $25{,}801$ & $1118.6$ \\
$0.25$ & $0.2495$ & $613.8$  &  $1.75 \times 10^{6}$ & $18{,}527$ &  $817.8$ \\
$0.30$ & $0.2998$ & $426.7$  &  $8.89 \times 10^{5}$ & $13{,}108$ &  $609.4$ \\
$0.35$ & $0.3483$ & $287.5$  &  $4.28 \times 10^{5}$ & $9{,}418$  &  $441.2$ \\
$0.40$ & $0.3987$ & $188.4$  &  $1.97 \times 10^{5}$ & $6{,}279$  &  $313.4$ \\
$0.45$ & $0.4515$ & $122.7$  &  $9.00 \times 10^{4}$ & $4{,}310$  &  $223.6$ \\
$0.50$ & $0.4980$ & $78.86$  &  $3.85 \times 10^{4}$ & $2{,}501$  &  $157.1$ \\
$0.55$ & $0.5532$ & $48.89$  &  $1.63 \times 10^{4}$ & $2{,}407$  &  $109.4$ \\
$0.60$ & $0.6002$ & $30.03$  &  $6.73 \times 10^{3}$ & $1{,}151$  &  $75.12$ \\
$0.65$ & $0.6464$ & $17.77$  &  $2.60 \times 10^{3}$ & $787$      &  $50.26$ \\
$0.70$ & $0.6968$ & $10.13$  & $918.2$               &   $526$    &  $33.40$ \\
$0.75$ & $0.7512$ & $5.363$  & $295.3$               &   $337$    &  $21.55$ \\
$0.80$ & $0.7999$ & $2.688$  & $89.0$                &   $206$    &  $13.44$ \\
$0.85$ & $0.8490$ & $1.225$  & $22.3$                &   $101$    &   $8.12$ \\
$0.90$ & $0.8997$ & $0.464$  & $4.25$                &    $39$    &   $4.63$ \\
$0.95$ & $0.9506$ & $0.120$  & $0.482$               &    $17$    &   $2.44$ \\
$1.00$ & $1.0000$ & $0.000$  & $0$                   &     $0$    &    ---   \\
\bottomrule
\end{tabular}
\end{table}


\section{Summary and discussion}
\label{sec:summary}

The simulations above test and refine the rigorous picture of the min-plus process on the binary tree. At $p=1/2$ with all-ones leaves (Section~\ref{sec:critical}), we find finite-depth numerical agreement with the stretched-exponential asymptotic of Theorem~1 in~\cite{AuffingerCable2017} at effective tree depths $N$ up to $60$: the empirical Q-Q line against the theoretical $\Beta(2,1)$ profile is nearly affine, indicating that the theoretical distributional shape is reproduced at finite $N$ up to a $1/\sqrt{N}$ scale-prefactor lag that puts the empirical mean $\approx 12\%$ short of the asymptote at $N \leq 60$. In the sub-critical regime (Section~\ref{sec:subcritical}) Lemma~\ref{lem:px1}\,(i) yields the closed form $\PP(\XN=1) \to (1-2p)/(1-p)$ at geometric rate $(2p)^{N}$, and the empirical sweep matches the closed form within a Monte Carlo standard error at every sampled $p$. The super-critical sweep (Section~\ref{sec:supercritical}) at $p = 0.6,\, 0.7$ and $N = 32,\, 40,\, 48$ documents an empirical growth exponent $\log{\widehat{\EE}[\XN]}/N \approx 0.35$ ($p=0.6$) and $\approx 0.43$ ($p=0.7$), well above the lemma's all-`$+$' lower-bound exponent $\log{(2p)}$, so the lower bound of Lemma~\ref{lem:px1}\,(iii) is not asymptotically tight at the depths simulated. In the near-critical window $p_{N} = 1/2 - c/N$ (Section~\ref{sec:nearcritical}), the rescaled mean is consistent across the sampled grid with critical $\sqrt{N}$ scaling and a $c$-dependent amplitude deficit that is nearly $N$-independent. The equal-$p_{N}$ pairs show that $p_{N}$ alone does not determine the finite-depth behavior and support $c = N(1/2 - p_{N})$ as the relevant scaling variable. In the Bernoulli-leaves variant (Section~\ref{sec:variant}), Lemma~\ref{lem:px0}~(i) pins the order parameter $\PP(\XN=0) = q$ exactly at $p=1/2$ for every $N$, locating the absorbing-state transition in the operator-mixing probability $p_{c}=1/2$ rather than in $q$. The conditional law on positives is found to deform substantially with $q$ at the simulated depth, indicating that the $\Beta(2,1)$ profile of the all-ones critical case need not persist in the conditional positive law of the Bernoulli-leaf model.

Three existing approaches seem relevant to making the observations above rigorous. First, the XY-coupling technique developed for the Derrida--Retaux model in~\cite{ChenDagard2021} has the right structure for free-energy bounds on $\PP(\XN > 0)$ near the absorbing-state transition of Section~\ref{sec:variant}. Second, the recursive-tree-process and endogeny framework of~\cite{AldousBandyopadhyay2005} fits our sub-critical regime $p < 1/2$, and would in principle yield an alternative proof of the convergence-in-distribution result of~\cite{AuffingerCable2017}. Third, the PDE/viscosity-solution approach developed by Addario-Berry, Beckman, and Lin~\cite{AddarioBerryBeckmanLin2022} for cooperative motions adapts naturally to recurrences for the marginal $\mu_{n}$ of the min-plus process and has been worked out for $\Beta(2,1)$-class scaling limits.

Recently, Chen, Duquesne, and Shi~\cite{ChenDuquesneShi2025} and, independently, Morfe~\cite{Morfe2025} proved Derrida's $N^{1/3}$ conjecture in the symmetric case (cf.~Remark~\ref{rmk:hipster}), settling the Hambly--Jordan effective-resistance limit and identifying a unifying framework of random homogeneous systems parametrized by $(\alpha_{+},\alpha_{-},p)$. The min-plus process is asymmetric in their classification and remains in the $\sqrt{N}$ regime governed by~\cite{AuffingerCable2017}. For the closely related graph-distance recursion on the same series-parallel graph---which~\cite{AuffingerCable2017} covers at criticality---Chen, Derrida, Duquesne, and Shi~\cite{ChenDerridaDuquesneShi2026} have recently determined the slightly super-critical exponent
\begin{equation}
\label{eq:cdds}
\alpha(p_{c}+\eps) \deq \lim_{n} \frac{1}{n}\log{\EE[D_{n}(p_{c}+\eps)]} \sim \frac{\pi}{\sqrt{6}}\sqrt{\eps},
\end{equation}
a Berezinskii--Kosterlitz--Thouless-type result in the spirit of the Derrida--Retaux line. The asymmetric side has otherwise been less probed than the now-rigorous symmetric one; the present finite-depth numerics address it. This two-class picture frames the open question of Section~\ref{sec:variant}: whether the conditional law $\XN \mid \XN > 0$ of the Bernoulli variant stays in the asymmetric $\sqrt{N}$ class as $q \to 1$. The near-critical window of Section~\ref{sec:nearcritical} poses a complementary one: whether $\EE[\log{\XN}]$ at $p_{N} = 1/2 - c/N$ converges after $\sqrt{N}$ rescaling to a limit $2\pi/3\sqrt{3} - \varphi_{\infty}(c)$, and whether $\varphi_{\infty}$ is governed by the same $\sqrt{\eps}$ mechanism as (\ref{eq:cdds})---a scaling-window question in the spirit of the Mallows-measure analyses~\cite{MuellerStarr2013}.


\appendix

\section{Algorithm and computational methodology}
\label{sec:methods}

The simulations were carried out with a single-file C11 program ($\sim 600$ lines of code) developed for this work; data analysis and figures were done in R. All computations ran on an Apple M1~Pro chip (2021 vintage). The code---the simulator, the leaf-table precompute tool, and the run and analysis scripts---is available in a public GitHub repository and archived on Zenodo under DOI \href{https://doi.org/10.5281/zenodo.21385814}{10.5281/zenodo.21385814}. Here we describe our algorithmic choices and their CPU and statistical bottlenecks.

\subsection{The level-wise recursion and FFT-based precompute}\label{sec:appendix-recursion}

The dual lemmas of Sections~\ref{sec:critical} and~\ref{sec:variant} are the $j=1$ and $j=0$ slices of a more general one-step recursion that propagates the full marginal $\mu_{n}(j) \deq \PP(X_{n} = j)$ through one level of the tree. Conditioning on the operator at the parent of the nodes $X_{n}^{(1)}, X_{n}^{(2)}$ as in the proofs of Lemmas~\ref{lem:px1} and~\ref{lem:px0},
\begin{equation}
\label{eq:full-rec}
\mu_{n+1}(j)  = p(\mu_{n} \ast \mu_{n})(j) + (1-p)[\bar{F}_{n}(j)^{2} - \bar{F}_{n}(j+1)^{2}],
\end{equation}
where
\begin{equation}
(\mu \ast \mu)(j) = \sum_{i=0}^{j} \mu(i)\,\mu(j-i)
\end{equation}
is the autoconvolution and
\begin{equation}
\bar{F}_{n}(j) = \PP(X_{n} \geq j) = \sum_{i \geq j} \mu_{n}(i)
\end{equation}
is the right-tail cdf. Since $\PP(\min(X,X')=j) = \bar{F}(j)^{2} - \bar{F}(j+1)^{2}$ for two \iid\ copies, the second term in (\ref{eq:full-rec}) is exactly the contribution from the `min' branch.

The recursion (\ref{eq:full-rec}) underlies a depth-saving choice of leaves: rather than start the simulations from all-ones leaves, it is advantageous to start from the $X_{0}(i)$ given by
\begin{equation}
\label{eq:onetwo}
X_{0}(i) = \begin{cases}
1, & \text{with probability } 1-p, \\
2, & \text{with probability } p,
\end{cases}
\end{equation}
which effectively enlarges the represented tree depth by one, or one level more, from
\begin{equation}
\label{eq:twice}
X_{0}(i) = \begin{cases}
1, & \text{with probability } (1-p)^{3}+2p(1-p)^{2}, \\
2, & \text{with probability } p(1-p)^{2}+p^{2}(1-p), \\
3, & \text{with probability } 2p^{2}(1-p), \\
4, & \text{with probability } p^{3},
\end{cases}
\end{equation}
so that a tree $T_{N}$ with the $X_{0}(i)$ given by (\ref{eq:twice}) is equivalent to a tree $T_{N+2}$ with all $X_{0}(i)=1$. One can move to still deeper levels at the cost of increasingly cumbersome expressions for the probabilities. These are the $K=1,2$ instances of precomputing the exact level-$K$ marginal $\mu_{K}$: iterated $K$ times from $\mu_{0}=\delta_{1}$, the recursion (\ref{eq:full-rec}) yields $\mu_{K}$, the law of the level-$K$ value in the original tree.

In an all-ones critical tree of depth $N$, the $2^{N-K}$ values found at level $K$ (counted from the leaves) are \iid\ from $\mu_{K}$, because the corresponding depth-$K$ subtrees share neither leaves nor internal-node operators. As a result, simulating the depth-$N$ process is statistically equivalent to (i)~precomputing $\mu_{K}$ once, then (ii)~simulating a depth-$(N-K)$ tree whose $2^{N-K}$ leaves are independently drawn from $\mu_{K}$. The total work per sample scales as
\begin{equation}
\label{eq:work}
W(N,K) \approx (c_{0}+c_{1}K)\,2^{N-K},
\end{equation}
with $c_{0}$ a small constant counting elementary operations per internal-node combine ($c_{0} \approx 5$ on our hardware) and $c_{1} \approx 1$ counting elementary operations per binary search through the leaf-table cdf (see Section~\ref{sec:appendix-simulator}).

A naive evaluation of (\ref{eq:full-rec}) at each step $k \to k+1$ costs $O((2^{k})^{2})$ floating-point operations, dominated by the autoconvolution; iterated $K$ times the total work is $O(4^{K})$. This is comfortable up to $K \approx 18$ (about $30$s on our hardware) but prohibitive past $K=22$. We replace the basic algorithm at larger $K$ by a radix-2 FFT autoconvolution~\cite{DanielsonLanczos1942,PressTeukolsky1992}: zero-pad to length $n_{\text{fft}}$, a power of two with $n_{\text{fft}} \geq 2(j_{\max}+1)$ to avoid circular wraparound, then FFT/square/inverse-FFT produces $(\mu \ast \mu)(j)$ on $[0, j_{\max}]$ in $O(n_{\text{fft}}\log{n_{\text{fft}}})$ operations. With cap $j_{\max} = 2^{m}-1$, the total cost across $K$ levels is $O(K\cdot 2^{m}\log{2^{m}})$; for $m=22$ and $K=35$, a few billion flops, running in $\approx 18$s on our hardware.

The cap $j_{\max} = 2^{m}-1$ truncates any mass the recursion would place at $j > j_{\max}$. By~\cite[Theorem~1]{AuffingerCable2017}, that mass is concentrated at $\log{j} > \sqrt{cK}$ in the limit at $p=1/2$, so $j_{\max} \gtrsim e^{\sqrt{cK}}$ keeps the truncation in the deep tail. We use $m=22$, giving $j_{\max} \approx 4.2 \times 10^{6}$: two orders of magnitude above the asymptotic right edge $e^{\sqrt{cK}} \approx 4 \times 10^{4}$ at $K=35$, and an order of magnitude above the largest sample observed in the production grid ($\max\XN = 4.1 \times 10^{5}$ at $N=60$). A matched-seed sensitivity check on the $N=48$ critical sweep run at $m=22$ versus $m=20$ finds $99{,}999$ of $10^{5}$ samples bit-identical and
\begin{equation}
\bigl|\widehat\EE[\log{\XN}/\sqrt{cN}]_{m = 22} - \widehat\EE[\log{\XN}/\sqrt{cN}]_{m = 20}\bigr| \approx 2.4 \times 10^{-10},
\end{equation}
six orders of magnitude below the Monte Carlo standard error $\sigma \approx 7.5 \times 10^{-4}$ of the rescaled mean. Truncation thus contributes negligibly to the reported moments and bulk quantiles; full-tail studies of the upper $10^{-5}$ region would require a larger buffer.

\subsection{The hybrid simulator}
\label{sec:appendix-simulator}

The simulator combines depth-first traversal of the upper tree with a tight level-wise inner loop on a per-thread reusable buffer. Pure depth-first uses $O(N)$ memory but pays one function call per internal node and is appreciably slower than the level-wise variant; pure level-wise allocates $O(2^{N})$ memory, which becomes the binding constraint past $N \approx 30$. The hybrid recurses depth-first down to a cutoff depth $K_{\text{cut}}$ (default $\min(N, 20)$) and then runs the level-wise inner loop on a $2^{K_{\text{cut}}} \times 8$\,B per-thread buffer. Force-inlining the level-wise function and marking the buffer pointer \texttt{restrict} are essential for the level-wise advantage to survive optimization under both Apple~clang and GCC.

The leaf draw branches three ways depending on the leaf-distribution mode: a fixed integer for the all-ones case; a constant-time table for the small-support modes (eqs.~(\ref{eq:onetwo}), (\ref{eq:twice}), and the Bernoulli variant); or an $O(\log_{2}{T})$ binary search through the cumulative thresholds of the precomputed leaf table, where $T$ is the number of nonzero entries in $\mu_{K}$. For $K=35$, $T \approx 4 \times 10^{6}$ and the search costs $\approx 22$ comparisons per leaf, i.e.\ $\approx 0.6$ per precompute level, consistent with the $c_{1}\approx 1$ coefficient of (\ref{eq:work}).

\subsection{Statistical bottlenecks}
\label{sec:appendix-statistics}

For a fixed threshold $x$, let $\pi_{x} = \PP(\XN \leq x)$. The Monte Carlo error on the empirical cdf value $\widehat{\PP}(\XN \leq x)$ is $\sigma_{x} \sim \sqrt{\pi_{x}(1-\pi_{x})/M} \leq 1/2\sqrt{M}$, independent of $N$. For the rescaled mean $\widehat{\EE}[\log{\XN}]/\sqrt{cN}$ relevant to the scaling test of Section~\ref{sec:critical}, the dependence on $N$ enters only through the variance of the limit law. By \cite[Theorem~1]{AuffingerCable2017}, the rescaled $\log{\XN}/\sqrt{cN}$ converges in distribution to $\Beta(2,1)$, which has variance $\sigma_{\mathcal{B}}^{2} = 1/18$. Consequently,
\begin{equation}
\label{eq:se-mean}
\sigma\bigl(\widehat{\EE}[\log{\XN}]/\sqrt{cN}\bigr) \approx \frac{\sigma_{\mathcal{B}}}{\sqrt{M}} = \frac{1}{\sqrt{18\,M}}.
\end{equation}
At $M=10^{5}$ this is about $7.5 \times 10^{-4}$, which resolves the theoretical limit $2/3$ to about $0.1\%$ relative error---enough to resolve the leading $1/\sqrt{N}$ correction at the depths we simulate.

Quantile errors inherit the $1/\sqrt{M}$ rate, with a constant set by the local density of $\Beta(2,1)$. For the median of $\log{\XN}/\sqrt{cN}$ the density is $2t$, giving $f(t_{0.5}) = \sqrt{2}$ and SE $\approx 1/\sqrt{8M}$, slightly worse than the SE on the mean. At $M=10^{5}$ the Monte Carlo error on the rescaled mean is already small compared with the finite-depth drift at $N \leq 60$; the main obstruction to a sharper extrapolation is the combination of finite-$N$ bias and the absence of a theoretically prescribed subleading correction. Larger $M$ would mainly help tail and extreme-quantile diagnostics; larger $N$ would be needed to stabilize extrapolations of the asymptote.

A linear extrapolation in $1/\sqrt{N}$ over the seven $K=35$ depths gives intercept $1.060 \pm 0.007$ for $\widehat{\EE}[\log{\XN}]/\sqrt{N}$, undershooting the theoretical $2\pi/3\sqrt{3} \approx 1.209$~\cite[Corollary~1]{AuffingerCable2017} by $\approx 12\%$. The gap is sensitive to the assumed subleading structure: adding a $d/N$ term lifts the intercept to $1.264 \pm 0.049$ with $d$ significant at $\sim 4\sigma$, while a non-parametric Bulirsch--Stoer rational extrapolation~\cite{BulirschStoer1964,PressTeukolsky1992} returns $1.18$. The three estimates (Table~\ref{tab:extrap}) straddle the asymptote within $\sim 10\%$, with the precise residual depending on the assumed next-order term---an unavoidable consequence of extrapolating from seven points without a theoretical anchor. The data are quantitatively consistent with the asymptote at the few-percent level once curvature is allowed; the subleading exponent is itself a research question.

\begin{table}[!htb]
\caption{\label{tab:extrap}Extrapolation of $\widehat{\EE}[\log{\XN}]/\sqrt{N}$ to $1/\sqrt{N} \to 0$ over the seven $K=35$ effective depths $N=36, 40, \ldots, 60$, against the theoretical asymptote $2\pi/3\sqrt{3}$ \cite{AuffingerCable2017}.}
\centering
\begin{tabular}{lcr}
\toprule
Method & Intercept & Residual \\
\midrule
OLS linear $a + b/\sqrt{N}$            & $1.060 \pm 0.007$ & $-12.4\%$ \\
OLS quadratic $a + b/\sqrt{N} + d/N$   & $1.264 \pm 0.049$ & $+4.5\%$  \\
Bulirsch--Stoer rational               & $1.18$            & $-2.0\%$  \\
\bottomrule
\end{tabular}
\end{table}


\section*{Acknowledgments}

The author thanks the anonymous referee of~\cite{Mendonca2020} who called his attention to~\cite{AuffingerCable2017}, which together with two seminars the author attended in Paris in 2019, given independently by B.~Derrida and Z.~Shi, on the Derrida--Retaux model sparked his interest in the min-plus process. The author also thanks the two anonymous reviewers of the present manuscript, one of whom suggested the near-critical experiment of Section~\ref{sec:nearcritical}. This work received partial financial support from Funda\c{c}\~{a}o de Amparo \`{a} Pesquisa do Estado de S\~{a}o Paulo -- FAPESP, Brazil, through grant no.~2020/04475-7.


\section*{Data and code availability}

The simulator, the leaf-table precompute tool, the run and analysis scripts, and the histograms and summary statistics of every simulation reported in this paper are openly available in a public GitHub repository and are archived on Zenodo under DOI \href{https://doi.org/10.5281/zenodo.21385814}{10.5281/zenodo.21385814}. Raw sample streams and the precomputed leaf tables regenerate exactly from the recorded seeds and scripts.


\medskip

\end{document}